\def\bR{\begin{color}{red}}
\def\bB{\begin{color}{blue}}
\def\bM{\begin{color}{magenta}} 
\def\bC{\begin{color}{cyan}}
\def\bW{\begin{color}{white}}
\def\bBl{\begin{color}{black}}
\def\bG{\begin{color}{green}}
\def\bY{\begin{color}{yellow}}
\def\e{\end{color}}
\newcommand{\bit}{\begin{itemize}}
\newcommand{\eit}{\end{itemize}\par\noindent}
\newcommand{\ben}{\begin{enumerate}}
\newcommand{\een}{\end{enumerate}\par\noindent}
\newcommand{\beq}{\begin{equation}}
\newcommand{\eeq}{\end{equation}\par\noindent} 
\newcommand{\beqa}{\begin{eqnarray*}}
\newcommand{\eeqa}{\end{eqnarray*}\par\noindent}
\newcommand{\beqn}{\begin{eqnarray}}
\newcommand{\eeqn}{\end{eqnarray}\par\noindent}
\tikzstyle{white dot}=[dot,fill=white]
\tikzstyle{none}=[inner sep=0mm]
\tikzstyle{dot}=[inner sep=0.5mm,fill=black,draw=black,shape=circle]
\tikzstyle{dotpic}=[baseline=-0.25em,shorten <=-0.1mm,shorten >=-0.1mm,scale=0.6]
\tikzstyle{small}=[inner sep=0.4mm]
\tikzstyle{dotpic inline}=[baseline=(current bounding box).south]
\tikzstyle{cdiag}=[baseline=(current bounding box).east,xscale=1.5,-latex]
\tikzstyle{every loop}=[]
\tikzstyle{rn}=[dot]
\tikzstyle{gn}=[dot]
\tikzstyle{bn}=[inner sep=0pt]
\tikzstyle{uploop}=[in=45,out=135,loop]
\tikzstyle{downloop}=[in=-45,out=-135,loop]
\tikzstyle{small dot}=[dot,inner sep=0.4mm]
\tikzstyle{small white dot}=[small dot,fill=white]
\tikzstyle{small gray dot}=[small dot,fill=gray!50]
\tikzstyle{greenbox}=[rectangle,fill=gray!30,draw=gray!50!black,minimum height=7mm,minimum width=7mm]
\tikzstyle{bluebox}=[rectangle,fill=white,draw=gray,minimum height=7mm,minimum width=7mm]
\tikzstyle{cnot}=[fill=white,shape=circle,inner sep=-1.4pt]
\tikzstyle{pt}=[regular polygon,regular polygon sides=3,draw=black,scale=0.75,inner sep=-0.5pt]
\tikzstyle{copt}=[pt,regular polygon rotate=180]
\tikzstyle{tick}=[sloped,rotate=90,font=\small\bf,xshift=0.07mm]
\tikzstyle{pauli z}=[sloped,rotate=90,font=\scriptsize\bf]
\tikzstyle{white dot}=[dot,fill=white]
\tikzstyle{gray dot}=[dot,fill=gray!50]
\tikzstyle{gs dot}=[dot,fill=gray]
\tikzstyle{small dotpic}=[dotpic,scale=0.6]
\tikzstyle{mux}=[rectangle,draw=black,scale=0.5,minimum width=1.8cm,minimum height=1cm]
\tikzstyle{square box}=[rectangle,fill=white,draw=black,minimum height=6mm,minimum width=6mm]
\tikzstyle{square gray box}=[rectangle,fill=gray!30,draw=black,minimum height=6mm,minimum width=6mm]
\tikzstyle{dashed box}=[draw=black,dashed,minimum height=12mm,minimum width=12mm,fill=gray!20]
\tikzstyle{box vertex}=[rectangle,draw=black]
\newtheorem{theorem}{Theorem}[section]
\newtheorem{proposition}[theorem]{Proposition}
\newtheorem{lemma}[theorem]{Lemma}
\newtheorem{corollary}[theorem]{Corollary}
\theoremstyle{definition}\newtheorem{example}[theorem]{Example}
\theoremstyle{definition}
\theoremstyle{definition}\newtheorem{definition}[theorem]{Definition}
\theoremstyle{definition}
\theoremstyle{definition}\newtheorem{remark}[theorem]{Remark}
\theoremstyle{definition}
\theoremstyle{definition}
\theoremstyle{definition}
\theoremstyle{definition}
\newcommand{\ket}[1]{|#1\rangle}
\newcommand{\bra}[1]{\langle#1|}
\newcommand{\ketGHZ}{\ket{\textit{GHZ}\,}}
\newcommand{\ketW}{\ket{\textit{W\,}}}
\newcommand{\dotunit}[1]{%
\begin{tikzpicture}[dotpic,yshift=-1mm]
\node [#1] (0) at (0,0.25) {}; 
\draw (0)--(0,-0.15);
\end{tikzpicture}\,}
\newcommand{\dotcounit}[1]{%
\begin{tikzpicture}[dotpic,yshift=1mm]
\node [#1] (0) at (0,-0.25) {}; 
\draw (0)--(0,0.15);
\end{tikzpicture}\,}
\newcommand{\dotcap}[1]{%
\begin{tikzpicture}[dotpic,yshift=-1mm]
	\node [#1] (a) at (0,0.25) {};
	\draw [bend right] (a) to (-0.4,-0.15) (0.4,-0.15) to (a);
\end{tikzpicture}}
\newcommand{\dotcup}[1]{%
\begin{tikzpicture}[dotpic,yshift=1mm]
	\node [#1] (a) at (0,-0.25) {};
	\draw [bend right] (-0.4,0.15) to (a) (a) to (0.4,0.15);
\end{tikzpicture}}
\newcommand{\dotmult}[1]{%
\begin{tikzpicture}[dotpic]
	\node [#1] (a) {};
	\draw (a) -- (-90:0.35);
	\draw (a) -- (45:0.4);
	\draw (a) -- (135:0.4);
\end{tikzpicture}}
\newcommand{\dotcomult}[1]{%
\begin{tikzpicture}[dotpic]
	\node [#1] (a) {};
	\draw (a) -- (90:0.35);
	\draw (a) -- (-45:0.4);
	\draw (a) -- (-135:0.4);
\end{tikzpicture}}
\newcommand{\dottickunit}[1]{%
\begin{tikzpicture}[dotpic,yshift=-1mm]
\node [#1] (0) at (0,0.25) {}; 
\draw (0)-- node[tick]{-} (0,-0.15);
\end{tikzpicture}}
\newcommand{\dotcrossunit}[1]{%
\begin{tikzpicture}[dotpic,yshift=-1mm]
\node [#1] (0) at (0,0.25) {}; 
\draw (0)-- node[pauli z]{$\times$} (0,-0.15);
\end{tikzpicture}}
\newcommand{\tick}{%
\begin{tikzpicture}[dotpic]
	\node [style=none] (0) at (0,0.25) {};
	\node [style=none] (1) at (0,-0.35) {};
	\draw (0) -- node[tick]{-} (1);
\end{tikzpicture}}
\newcommand{\cross}{%
\begin{tikzpicture}[dotpic]
	\node [style=none] (0) at (0,0.35) {};
	\node [style=none] (1) at (0,-0.35) {};
	\draw (0) -- node[pauli z]{$\times$} (1);
\end{tikzpicture}}
\newcommand{\point}[1]{%
\begin{tikzpicture}[dotpic]
	\begin{pgfonlayer}{nodelayer}
		\node [style=white dot] (0) at (0, 0.3) {$#1$};
		\node [style=none] (1) at (0, -0.6) {};
	\end{pgfonlayer}
	\begin{pgfonlayer}{edgelayer}
		\draw (0) to (1.center);
	\end{pgfonlayer}
\end{tikzpicture}}
\newcommand{\lolli}{%
\begin{tikzpicture}[dotpic]
	\path [use as bounding box] (-0.25,-0.25) rectangle (0.25,0.5);
	\node [style=small dot] (0) at (0, 0.15) {};
	\node [style=none] (1) at (0, -0.25) {};
	\draw  (0) to (1.center);
	\draw [out=45, looseness=2.00, in=135, loop] (0) to ();
\end{tikzpicture}}
\newcommand{\cololli}{%
\begin{tikzpicture}[dotpic]
	\path [use as bounding box] (-0.25,-0.5) rectangle (0.25,0.5);
	\node [style=none] (0) at (0, 0.5) {};
	\node [style=small dot] (1) at (0, 0) {};
	\draw [out=-45, looseness=2.00, in=-135, loop] (1) to ();
	\draw  (1) to (0.center);
\end{tikzpicture}}
\newcommand{\unit}{\dotunit{small dot}}
\newcommand{\counit}{\dotcounit{small dot}}
\newcommand{\mult}{\dotmult{small dot}}
\newcommand{\comult}{\dotcomult{small dot}}
\newcommand{\tickunit}{\dottickunit{small dot}}
\newcommand{\blackcup}{\dotcup{small dot}}
\newcommand{\blackcap}{\dotcap{small dot}}
\newcommand{\whiteunit}{\dotunit{small white dot}}
\newcommand{\whitecounit}{\dotcounit{small white dot}}
\newcommand{\whitemult}{\dotmult{small white dot}}
\newcommand{\whitecomult}{\dotcomult{small white dot}}
\newcommand{\whitecrossunit}{\dotcrossunit{small white dot}}
\newcommand{\circl}{\begin{tikzpicture}[dotpic]
	\node [circle,draw=black,inner sep=1pt] {\footnotesize\sf\phantom{$-$}};
\end{tikzpicture}}
\newcommand{\tickpsi}{
\raisebox{-1mm}{\begin{tikzpicture}[scale=0.6]
	\begin{pgfonlayer}{nodelayer}
		\node [style=none] (0) at (0,0.25) {};
		\node [style=none,font=\small\bf,xshift=0.07mm] (2) at (0,0) {-};
		\node [style=none] (1) at (0,-0.25) {};
	\end{pgfonlayer}
	\begin{pgfonlayer}{edgelayer}
		\draw (0) to (2.center) (2.center) to (1);
	\end{pgfonlayer}
\end{tikzpicture}}\circ\psi}
\newcommand{\mdots}{\overbrace{
\begin{tikzpicture}
	\begin{pgfonlayer}{nodelayer}
		\node [style=white dot] (0) at (0.75, 1.5) {};
		\node [style=none] (1) at (1.25, 1.5) {$\ldots$};
		\node [style=white dot] (2) at (1.75, 1.5) {};
		\node [style=dot] (3) at (1.25, 1) {};
	\end{pgfonlayer}
	\begin{pgfonlayer}{edgelayer}
		\draw (2) to (3);
		\draw (0) to (3);
	\end{pgfonlayer}
\end{tikzpicture}}^m}
\def\II{{\rm I}}
\title{The GHZ/W-calculus contains rational arithmetic}
\author{
Bob Coecke\thanks{Supported by EPSRC Advanced Research Fellowship EP/D072786/1, Office of Naval Research Grant N00014-09-1-0248 and EC FP6 STREP QICS.
Some of this work was performed during a visit at IQOQI Vienna.},
Aleks Kissinger\thanks{Supported by a Clarendon scholarship},
Alex Merry\thanks{Supported by a DTA scholarship}
\institute{Oxford University Computing Laboratory, Quantum Group\\
Wolfson Building, Parks Road,
Oxford OX1 3QD, UK}
\email{coecke/alek/alemer@comlab.ox.ac.uk}
\and
Shibdas Roy\thanks{Funded by DST, Govt. of India}
\institute{Center for~Quant.~Inf.~and Quant.~Comp.\\
Department of Physics,
IISc, Bangalore}
\email{roy\underline{\ }shibdas@yahoo.co.in}
}
\begin{document}
\maketitle

\begin{abstract}
Graphical calculi for representing interacting quantum systems serve
a number of purposes: compositionally, intuitive graphical
reasoning, and a logical underpinning for automation.  The power of
these calculi stems from the fact that they embody generalized
symmetries of the structure of quantum operations, which, for
example, stretch well beyond the Choi-Jamiolkowski isomorphism.  One
such calculus takes the GHZ and W states as its basic generators.
Here we show that this  language allows one to encode standard
rational calculus, with the GHZ state as multiplication, the W state
as addition, the Pauli X gate as multiplicative inversion, and the
Pauli Z gate as additive inversion.
\end{abstract}

\section{Introduction}

\em Categorical quantum mechanics \em \cite{AC} aims to recast quantum
mechanical notions in terms of symmetric monoidal categories with
additional structure.  One layer of extra structure, compactness
\cite{KellyLaplaza}, encompasses the well-known Choi-Jamiolkowski
isomorphism.  Compactness is itself subsumed by the much richer
commutative Frobenius algebra structure \cite{CarboniWalters}, which
governs classical data, observables, and certain tripartite states
\cite{CPav, CD, CES2, CK}.  In this symmetric monoidal form,
quantum mechanics enjoys:
\bit
\item an \em operational interpretation \em by making sequential and
      parallel composition of systems and processes the basic
      connectives of the language \cite{ContPhys};
\item an intuitive \em diagrammatic calculus \em \cite{ContPhys} via the
      Penrose-Joyal-Street diagrammatic calculus for symmetric monoidal
      categories \cite{Penrose, JS}, augmented with Kelly and Laplaza's
      coherence result for compact categories,  and Lack's work on
      distributive laws \cite{Lack};
\item a \em logical underpinning \em \cite{RossThesis} via the closed
      structure resulting from compactness.
\eit
The last allows the application of automated reasoning techniques to
quantum mechanics \cite{DD, quanto, DK}.  A prototype software
implementation, {\tt quantomatic}, already exists and is jointly
developed in Edinburgh and Oxford.

Categorical quantum mechanics has meanwhile been successful in
solving problems in quantum information \cite{DP} and quantum
foundations \cite{CES2}, where other methods and structures failed
to be adequate. Key to these results is the description of \em
interacting basis structures \em in \cite{CD}. The language of that
paper consists of a pair of abstract bases or \em basis
structures\em, which are, again in abstract terms, mutually
unbiased, and an abstract generalisation of phases relative to
bases.  This formalism has been implemented in {\tt quantomatic},
and is expressive enough to universally model any linear map
$f:\mathbb{Q}^{\otimes n}\to \mathbb{Q}^{\otimes m}$, where
$\mathbb{Q}=\mathbb{C}^2$.  On the other hand, if we restrict the
language to the two basis structures only it becomes very poor,
describing no more than 2 qubit states.

This brings us to the subject of this paper.  In \cite{CK} two of
the authors introduced pairs of interacting commutative Frobenius
algebras that do not model bases, but the tripartite GHZ and W
states \cite{DVC}.  Both these states can indeed be endowed with the
structure of a commutative Frobenius algebra, yielding a \em GHZ
structure \em and  a \em W structure \em as we recall in Section
\ref{sec:ghz-w}. The main point of this paper is that the language
consisting of the GHZ structure (which is essentially the same as a
basis structure) and the W structure is already rich enough to
encode rational arithmetic, with the exception of additive inverses.
Now an infinite number of qubit states can be described,
corresponding to the rational numbers of the arithmetic system. We
demonstrate this in Section \ref{sec:arithmetic}. In Section
\ref{sec:additive-inverses} we extend the GHZ/W-calculus with one
basic graphical element which then allows additive inverses to be
captured.  Section \ref{sec:automation} addresses the issue of how
to implement the calculus within the {\tt quantomatic} software.

We assume that the reader is familiar with the diagrammatic calculus
for symmetric monoidal categories \cite{JS,SelingerSurvey}, which is
also reviewed in \cite{CK}. We also assume that the reader is
familiar with the (very) basics of finite dimensional Hilbert spaces
and Dirac notation as used in quantum computing.

\section{Frobenius Algebras and the GHZ/W-calculus}\label{sec:ghz-w}

Fix a symmetric monoidal category $({\bf V},\otimes,I,\sigma)$. Throughout this paper, we shall
define morphisms in $\bf V$ using the graphical notation defined in \cite{SelingerSurvey}. In this
notation, `wires' correspond to objects and vertices, and `boxes' correspond to morphisms. We shall
express composition vertically, from top to bottom, and the monoidal product as (horizontal)
juxtaposition of graphs. When wires are not labeled, they are assumed to represent a fixed object,
$Q$.

\begin{example}
A canonical example throughout will be ${\bf FHilb}$, the category of finite-dimensional Hilbert
spaces and linear maps. In this case, $\otimes$ is the usual tensor product, $\sigma$ the swap map
$v \otimes w \mapsto w \otimes v$, $I := \mathbb C$ and $Q := \mathbb C^2$, the space of qubits. We
shall also refer the ``projective'' category of finite-dimensional Hilbert spaces, ${\bf FHilb}_p$,
whose objects are the same as ${\bf FHilb}$ and whose arrows are linear maps, taken to be
equivalent iff they differ only by a non-zero scalar.
\end{example}

\subsection{Commutative Frobenius Algebras}

A \em commutative Frobenius algebra \em (CFA) consists of an internal  commutative monoid $(Q, \mult\ , \unit)$ and an internal  cocommutative comonoid $(Q, \comult\ , \counit)$ that interact via the Frobenius law:
 \[
}
,\,\cololli\,)$ is easily seen to be a special CFA.
\end{proof}

\begin{lemma}[Herrmann \cite{Hermann}]
If the loop of a CFA is disconnected, i.e.~factors over the tensor unit, then it obeys eq.~{\rm(\ref{eq:antispec})}, that is the CFA is necessarily anti-special.
\end{lemma}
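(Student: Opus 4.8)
The plan is to reduce both sides of eq.~(\ref{eq:antispec}) to expressions built only from the loop map $\lambda := \mult\circ\comult$, the unit $\unit$ and the counit $\counit$, and then to feed in the hypothesis. The hypothesis ``the loop is disconnected, i.e.\ factors over the tensor unit'' says exactly that there are morphisms $h : Q \to \II$ and $g : \II \to Q$ with $\lambda = g\circ h$: graphically, $\lambda$ — normally drawn as a $\comult$ joined to a $\mult$ — splits into an effect $h$ sitting above a state $g$, with no connecting wire.

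First I would rewrite the two sides. On the left stands the scalar $\circl$, which is the cap composed onto the cup and hence equals $\counit\circ\lambda\circ\unit$, times the loop map $\lambda$. On the right stand two disconnected pieces: a dot carrying one input leg and one loop, stacked above a dot carrying one output leg and one loop. By the normal-form result recorded earlier (a connected graph of $\mult,\comult,\unit,\counit,\sigma,1_Q$ is determined by its numbers of inputs, outputs and loops), the upper piece is the unique such map with one input, no output and one loop, namely $\counit\circ\lambda$, and the lower piece is the one with no input, one output and one loop, namely $\lambda\circ\unit$. Thus eq.~(\ref{eq:antispec}) is precisely the equation
\[
(\counit\circ\lambda\circ\unit)\cdot\lambda \;=\; \lambda\circ\unit\circ\counit\circ\lambda .
\]

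Then I would substitute $\lambda = g\circ h$ on both sides and use that endomorphisms of $\II$ are central, so they detach from any composite. The left-hand side becomes $(\counit\circ g)\,(h\circ\unit)\cdot(g\circ h)$, and the right-hand side becomes $g\circ(h\circ\unit)\circ(\counit\circ g)\circ h = (h\circ\unit)\,(\counit\circ g)\cdot(g\circ h)$; since scalar multiplication is commutative, both collapse to $(h\circ\unit)\,(\counit\circ g)\cdot\lambda$. Hence eq.~(\ref{eq:antispec}) holds and the CFA is anti-special.

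I do not expect a genuine obstacle. Once the hypothesis is phrased as a factorisation $\lambda = g\circ h$, the conclusion is a one-line scalar computation: the assumption that the loop disconnects is exactly what makes every unit/counit scalar float free on each side so that the two sides coincide. The only step needing care is the first rewriting — reading the disconnected diagram in eq.~(\ref{eq:antispec}) correctly and matching its two components to $\counit\circ\lambda$ and $\lambda\circ\unit$ via the normal-form theorem. In ${\bf FHilb}_p$, or any setting where nonzero scalars are discarded, the argument is even shorter, since both sides are then visibly proportional to $\lambda$.
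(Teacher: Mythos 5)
The paper never proves this lemma---it is stated with only a citation to Herrmann---so there is no internal proof to compare against; judged on its own, your argument is correct. You read eq.~(\ref{eq:antispec}) correctly as $(\counit\circ\lambda\circ\unit)\cdot\lambda=\lambda\circ\unit\circ\counit\circ\lambda$ for the loop map $\lambda$ (the two disconnected pieces on the right are exactly the paper's shorthand for $\counit\circ\lambda$ and $\lambda\circ\unit$, so the normal-form theorem is not even needed there), and substituting $\lambda=g\circ h$ and sliding the scalars $h\circ\unit$ and $\counit\circ g$ out of the composites makes the two sides coincide. This is the standard one-line scalar computation one would expect, and I see no gap.
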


The following is an example of a GHZ-structure in ${\bf FHilb}$:
    \begin{equation}\label{GHZ-SCFA}
        \begin{split}
            \whitemult & = \ket{0}\bra{00} + \ket{1}\bra{11} \qquad\qquad
            \whiteunit = \sqrt{2}\, \ket{+} := \ket{0}+\ket{1} \\
            \whitecomult & = \ket{00}\bra{0} + \ket{11}\bra{1} \qquad\qquad
            \whitecounit = \sqrt{2} \bra{+} := \bra{0}+\bra{1}
        \end{split}\vspace{-1.5mm}
    \end{equation}
and we also have an example  of a W-structure in ${\bf FHilb}$:
  \begin{equation}\label{W-ACFA}
        \begin{split}
            \mult & = \ket{1}\bra{11} + \ket{0}\bra{01} + \ket{0}\bra{10}
            \qquad\qquad\qquad
            \unit = \ket 1\qquad\qquad \\
            \comult & = \ket{00}\bra{0} + \ket{01}\bra{1} + \ket{10}\bra{1}
            \qquad\qquad\qquad
            \counit = \bra 0\qquad\qquad
        \end{split}
    \end{equation}
Note that the cups for these CFAs do not coincide:
\[
\begin{tikzpicture}[dotpic]
    \begin{pgfonlayer}{nodelayer}
        \node [style=white dot] (0) at (1.5, 0.75) {};
        \node [style=white dot] (1) at (-1.5, 0.25) {};
        \node [style=white dot] (2) at (1.5, 0.25) {};
        \node [style=none] (3) at (0, 0) {$:=$};
        \node [style=none] (4) at (-2, -0.25) {};
        \node [style=none] (5) at (-1, -0.25) {};
        \node [style=none] (6) at (1, -0.25) {};
        \node [style=none] (7) at (2, -0.25) {};
    \end{pgfonlayer}
    \begin{pgfonlayer}{edgelayer}
        \draw (0) to (2);
        \draw (7.center) to (2);
        \draw[bend right=45] (5.center) to (1);
        \draw (6.center) to (2);
        \draw[bend left=45] (4.center) to (1);
    \end{pgfonlayer}
\end{tikzpicture}\ = \ket{00}+\ket{11}
\qquad\qquad
\begin{tikzpicture}[dotpic]
    \begin{pgfonlayer}{nodelayer}
        \node [style=dot] (0) at (1.5, 0.75) {};
        \node [style=dot] (1) at (-1.5, 0.25) {};
        \node [style=dot] (2) at (1.5, 0.25) {};
        \node [style=none] (3) at (0, 0) {$:=$};
        \node [style=none] (4) at (-2, -0.25) {};
        \node [style=none] (5) at (-1, -0.25) {};
        \node [style=none] (6) at (1, -0.25) {};
        \node [style=none] (7) at (2, -0.25) {};
    \end{pgfonlayer}
    \begin{pgfonlayer}{edgelayer}
        \draw (0) to (2);
        \draw (7.center) to (2);
        \draw[bend right=45] (5.center) to (1);
        \draw (6.center) to (2);
        \draw[bend left=45] (4.center) to (1);
    \end{pgfonlayer}
\end{tikzpicture}\ = \ket{01}+\ket{10}
\]

However, the composition of a cap from one CFA with a cup from the other yields the Pauli X, or `NOT', gate:
\[
\begin{tikzpicture}[dotpic,scale=0.5]
    \node [bn] (0) at (0,1.5) {};
    \node [bn] (1) at (0,-1.5) {};
    \draw (0)-- node[tick]{-} (1);
\end{tikzpicture}
:=\
\begin{tikzpicture}[dotpic,yshift=-5mm,scale=0.5]
    \node [bn] (b0) at (-1,2) {};
    \node [dot] (0) at (0,0) {};
    \node [white dot] (1) at (1.5,1) {};
    \node [bn] (b1) at (2.5,-1) {};
    \draw (b0) to [out=-90,in=180] (0) (0) to [out=0,in=180] (1) (1) to [out=0,in=90] (b1);
\end{tikzpicture} \ =\
\begin{tikzpicture}[dotpic,yshift=-5mm,scale=0.5]
    \node [bn] (b0) at (-1,2) {};
    \node [white dot] (0) at (0,0) {};
    \node [dot] (1) at (1.5,1) {};
    \node [bn] (b1) at (2.5,-1) {};
    \draw (b0) to [out=-90,in=180] (0) (0) to [out=0,in=180] (1) (1) to [out=0,in=90] (b1);
\end{tikzpicture}
\ =\
\left(\begin{array}{cc}
0 & 1 \\
1 & 0
\end{array}\right)
\]
These CFAs respectively induce the following tripartite states:
\[
\begin{tikzpicture}[dotpic]
    \begin{pgfonlayer}{nodelayer}
        \node [style=white dot] (0) at (0, 0.75) {};
        \node [style=white dot] (1) at (0, 0) {};
        \node [style=white dot] (2) at (-0.5, -0.5) {};
        \node [style=none] (3) at (-1, -1) {};
        \node [style=none] (4) at (0, -1) {};
        \node [style=none] (5) at (1, -1) {};
    \end{pgfonlayer}
    \begin{pgfonlayer}{edgelayer}
        \draw (4.center) to (2);
        \draw (3.center) to (2);
        \draw (2) to (1);
        \draw (0) to (1);
        \draw (5.center) to (1);
    \end{pgfonlayer}
\end{tikzpicture}
\ = \ket{000}+\ket{111} = \ketGHZ
\qquad
\begin{tikzpicture}[dotpic]
    \begin{pgfonlayer}{nodelayer}
        \node [style=dot] (0) at (0, 0.75) {};
        \node [style=dot] (1) at (0, 0) {};
        \node [style=dot] (2) at (-0.5, -0.5) {};
        \node [style=none] (3) at (-1, -1) {};
        \node [style=none] (4) at (0, -1) {};
        \node [style=none] (5) at (1, -1) {};
    \end{pgfonlayer}
    \begin{pgfonlayer}{edgelayer}
        \draw (4.center) to (2);
        \draw (3.center) to (2);
        \draw (2) to (1);
        \draw (0) to (1);
        \draw (5.center) to (1);
    \end{pgfonlayer}
\end{tikzpicture}
\ = \ket{100}+\ket{010}+\ket{001} = \ketW
\]
As the name suggests, the associated tripartite state of the above GHZ-structure is a GHZ state,
and that of the W-structure is a W state. Furthermore, Theorem \ref{GHZ/Wthm} asserts that for
qubits, the associated tripartite state of \emph{any} GHZ-structure (resp. W-structure) is a GHZ
state (resp. W state), up to local operations.

\begin{theorem}{\rm\cite{CK}}\label{GHZ/Wthm}
For any special (respectively~anti-special) CFA on a qubit in ${\bf FHilb}$, the induced tripartite
state is SLOCC-equivalent to $\ketGHZ$ (respectively~$\ketW$).  Furthermore, any tripartite state
$\ket\Psi$ either induces a special or anti-special CFA-structure, depending on whether it is
SLOCC-equivalent to $\ketGHZ$ or to $\ketW$.
\end{theorem}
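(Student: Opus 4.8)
The plan is to prove the two halves separately, reducing each to the standard structures of eqs.~(\ref{GHZ-SCFA}) and~(\ref{W-ACFA}), whose induced tripartite states are $\ketGHZ$ and $\ketW$. Two elementary remarks do the linking. First, SLOCC-equivalence of tripartite states means exactly relatedness by an invertible local operator $A\otimes B\otimes C$. Second, if $f:Q\to Q$ is a CFA-isomorphism (an invertible map intertwining $\mult,\unit,\comult,\counit$ up to the evident conjugations) between two CFAs on $Q$, then it carries the induced state of one to $(f\otimes f\otimes f)$ applied to that of the other; hence isomorphic CFAs induce SLOCC-equivalent tripartite states.

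For the forward direction, take first a special CFA on a qubit. The underlying algebra of a special commutative Frobenius algebra is semisimple, hence over $\mathbb C$ a product of copies of $\mathbb C$; on a qubit it is therefore $\mathbb C\times\mathbb C$, a splitting $Q=L_1\oplus L_2$ with primitive idempotents $e_1,e_2$. The Frobenius form together with $\mult\circ\comult=1$ then pins down $\counit$ (it sends each $e_i$ to $1$) and hence $\comult$ (it sends $e_i\mapsto e_i\otimes e_i$), so the CFA is isomorphic to~(\ref{GHZ-SCFA}) via $f:\ket0\mapsto e_1,\ \ket1\mapsto e_2$; by the second remark its induced state is $(f\otimes f\otimes f)\ketGHZ\sim\ketGHZ$. (This amounts to the basis characterisation of special CFAs, Theorem~\ref{basisthm}.) Next take an anti-special CFA on a qubit. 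Unwinding eq.~(\ref{eq:antispec})---its right-hand side factors through the tensor unit, its left-hand side is a nonzero multiple of the loop map $\mult\circ\comult$---forces that loop map to have rank at most one, hence to be singular (were it invertible the CFA could moreover be made special, by Lemma~\ref{lem:iso-scfa}). So the underlying algebra is not $\mathbb C\times\mathbb C$ but the only other two-dimensional commutative algebra, $\mathbb C[x]/(x^2)$. Parametrising the admissible Frobenius forms on $\mathbb C[x]/(x^2)$ by a single scalar and computing the induced state, one finds in every case a state of the form $(f\otimes f\otimes f)\ketW$, hence SLOCC-equivalent to $\ketW$.

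For the converse I would invoke the Dür--Vidal--Cirac classification~\cite{DVC}: a genuinely tripartite three-qubit state is SLOCC-equivalent either to $\ketGHZ$ or to $\ketW$ (the remaining DVC classes are product or biseparable and do not arise as the tripartite state associated to a CFA on a qubit, so they fall outside the statement). Given $\ket\Psi$ SLOCC-equivalent to $\ketGHZ$, write $\ket\Psi=(A\otimes B\otimes C)\ketGHZ$ with $A,B,C$ invertible and read a CFA structure off $\ket\Psi$ directly: take $\comult:Q\to Q\otimes Q$ to be the map corresponding to $\ket\Psi\in Q\otimes(Q\otimes Q)$, with $\mult,\unit,\counit$ the structure this forces by duality, then check the Frobenius law and $\mult\circ\comult=1$. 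Invertibility of $A,B,C$ is precisely what makes the relevant bipartite marginals of $\ket\Psi$ non-degenerate, so the algebra is semisimple and the CFA is special. The case $\ket\Psi\sim\ketW$ runs the same way, except that the marginals are now singular, the loop factors over the tensor unit, and Herrmann's lemma above delivers eq.~(\ref{eq:antispec}), so the CFA is anti-special.

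The step I expect to be the main obstacle is the converse, and within it the W-case. Unlike a GHZ-class state, a W-class state does not determine its CFA---the one-parameter family of Frobenius forms on $\mathbb C[x]/(x^2)$ gives pairwise non-isomorphic anti-special CFAs---so one must make and justify a canonical choice of structure, and then verify the Frobenius law and eq.~(\ref{eq:antispec}) by direct computation rather than by the basis characterisation that shortcuts the GHZ case. A smaller point to track is that ``tripartite state'' must be read as genuinely tripartite entangled, so that exactly the $\ketGHZ$ and $\ketW$ SLOCC classes are in play.
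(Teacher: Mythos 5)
First, a point of reference: the paper itself gives no proof of this theorem --- it is imported verbatim from \cite{CK} --- so there is no in-paper argument to compare against, only the source it cites. Your forward direction does reconstruct that argument faithfully in outline: special CFAs on $\mathbb{C}^2$ are classified by bases (Theorem \ref{basisthm}), hence are all isomorphic to eq.~(\ref{GHZ-SCFA}) and induce $(f\otimes f\otimes f)\ketGHZ$; anti-speciality forces the loop map to be singular, the underlying algebra to be $\mathbb{C}[x]/(x^2)$, and the induced state into the W class. One small repair there: the left-hand side of eq.~(\ref{eq:antispec}) is the closed-loop scalar $\circl = \counit \circ \mult \circ \comult \circ \unit$ times the loop map, and you assert without justification that this multiple is nonzero. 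You should treat the case $\circl = 0$ separately: the equation then forces the disconnected right-hand side to vanish, and invertibility of the loop map would make either $\unit$ or $\counit$ zero, contradicting nondegeneracy of the Frobenius form --- so the conclusion (loop singular) survives, but not by the route you state.

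The genuine gap is in the converse. You propose to write $\ket\Psi = (A\otimes B\otimes C)\ketGHZ$, ``read a CFA structure off $\ket\Psi$ directly'' by bending one leg, and ``then check the Frobenius law and $\mult\circ\comult = 1$.'' For generic invertible $A$, $B$, $C$ that check fails: the candidate comultiplication obtained from an arbitrary SLOCC representative is not even coassociative unless $B$ and $C$ are suitably matched, so the structure read off $\ket\Psi$ directly is not a CFA at all. The actual content of the statement (and of the proof in \cite{CK}) is that one may choose a representative within the SLOCC class --- equivalently, absorb $A$, $B$, $C$ compatibly into the structure maps --- for which the induced structure is a genuine commutative Frobenius algebra, and that speciality versus anti-speciality of the result is then controlled by the rank of a bipartite marginal, which is a SLOCC invariant. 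You flag the need for a canonical choice only in the W case, and only on account of the one-parameter family of Frobenius forms on $\mathbb{C}[x]/(x^2)$; but the normalisation issue is equally essential in the GHZ case, and without it the sentence ``then check the Frobenius law'' names a step that would fail rather than a verification. Your closing remark that $\ket\Psi$ must be read as genuinely tripartite entangled (excluding the product and biseparable D\"ur--Vidal--Cirac classes) is correct and is indeed implicit in the theorem as stated.
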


Theorem \ref{basisthm} justifies the alternative name \em basis structure \em for GHZ-structures.

\begin{theorem}{\rm\cite{Aguiar}}\label{basisthm}
Special commutative Frobenius algebras on a finite-dimensional Hilbert space ${\cal H}$ are in 1-to-1 correspondence with (possibly non-orthogonal) bases for ${\cal H}$.
\end{theorem}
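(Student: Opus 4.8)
The plan is to exhibit the two maps realising the claimed bijection and check they are mutually inverse, isolating the one genuinely non-formal ingredient: the structure theory of finite-dimensional commutative $\mathbb C$-algebras.

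\textbf{From a basis to a special CFA, and recovery.} Given a basis $\{e_i\}$ of $\mathcal H$, I define the ``copying'' structure by $\comult\, e_i := e_i\otimes e_i$, $\counit\, e_i := 1$, $\mult(e_i\otimes e_j):=\delta_{ij}\, e_i$, $\unit:=\sum_i e_i$, extended bilinearly. All the axioms (commutative monoid, cocommutative comonoid, Frobenius law) then reduce to identities on basis elements and are verified directly, and speciality is immediate from $\mult\circ\comult\, e_i = e_i$. Conversely, from such a structure the basis is recovered as the set of \emph{normalised copyable points} $\{x:\II\to\mathcal H \mid \comult\circ x = x\otimes x,\ \counit\circ x = 1\}$: expanding $x=\sum_i x_i e_i$, the copying condition forces $x_i x_j=\delta_{ij}x_i$, so $x=0$ or $x=e_i$, and normalisation rules out $0$. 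Hence this assignment is injective; it will turn out to be the inverse of the map built next. (Nothing forces the $e_i$ to be orthonormal, so the Frobenius algebra need not be a dagger-Frobenius algebra; this is why the correspondence ranges over \emph{all} bases.)

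\textbf{From a special CFA to a basis.} Let $(\mult,\unit,\comult,\counit)$ be a special CFA on $\mathcal H$. Discarding $\comult$, the data $(\mathcal H,\mult,\unit)$ is a finite-dimensional commutative unital $\mathbb C$-algebra $A$. The crucial step is that speciality makes $A$ \emph{separable}: the Casimir element $e:=\comult\circ\unit\in A\otimes A$ satisfies $(a\otimes 1)\,e = e\,(1\otimes a)$ for all $a$ (this is exactly the Frobenius law applied to $\comult(a)=\comult(a\cdot\unit)=\comult(\unit\cdot a)$), while speciality gives $\mult(e)=\mult(\comult(\unit))=\unit$, so $e$ is a separability idempotent and $A$ has trivial Jacobson radical. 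Since $\mathbb C$ is algebraically closed, a semisimple finite-dimensional commutative $\mathbb C$-algebra is isomorphic to $\mathbb C^{\,n}$ with $n=\dim\mathcal H$; concretely there is a unique (unordered) family of primitive idempotents $\{p_i\}$ with $p_ip_j=\delta_{ij}p_i$ and $\sum_i p_i=\unit$, and this family is a basis of $\mathcal H$. I take $\{p_i\}$ to be the basis associated to the given CFA.

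\textbf{Matching the comonoid data and closing the loop.} It remains to see that $\comult$ and $\counit$ are the copying data of $\{p_i\}$, which simultaneously shows the two assignments are mutually inverse. The Frobenius law makes $\comult$ an $A$-bimodule map, so $\comult(p_j)=\comult(p_j\,p_j)=(p_j\otimes p_j)\,\comult(p_j)$ lies in $p_jA\otimes p_jA=\mathbb C\,(p_j\otimes p_j)$; write $\comult(p_j)=c_j\,(p_j\otimes p_j)$. Speciality gives $p_j=\mult(\comult(p_j))=c_j\,\mult(p_j\otimes p_j)=c_j\,p_j$, so $c_j=1$, and then the counit law forces $\counit(p_j)=1$. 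Thus the given CFA is precisely the copying structure on $\{p_i\}$, and $\{p_i\}$ is its set of normalised copyable points, so together with the first part the two maps are inverse bijections. The only step here that is not pure bookkeeping with idempotents and the Frobenius identities is the equivalence \emph{special $\Leftrightarrow$ separable} and the classification of commutative semisimple $\mathbb C$-algebras as $\mathbb C^{\,n}$; I would either invoke Wedderburn–Artin (plus ``separable $=$ semisimple'' over a perfect field), or, for a self-contained treatment, construct the $p_i$ by hand from the facts that $A$ is reduced (the separability idempotent kills nilpotents) and finite-dimensional over an algebraically closed field.
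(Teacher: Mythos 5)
The paper does not prove this theorem at all: it is imported as a cited result (attributed to \cite{Aguiar}), so there is no internal proof to compare against. Your argument is correct and complete, and it is essentially the standard proof of this classification: the forward map (basis $\mapsto$ copying comonoid plus its induced monoid) and the inverse via copyable points are verified by direct computation on coordinates, and the substantive direction correctly isolates the one non-formal ingredient, namely that speciality makes $\delta(1)$ a separability idempotent, whence the algebra is semisimple and, being commutative over the algebraically closed field $\mathbb C$, is a product of $n=\dim\mathcal H$ copies of $\mathbb C$ whose primitive idempotents give the basis. Your final step, using the Frobenius bimodule property of the comultiplication together with speciality and counitality to pin down $\delta(p_j)=p_j\otimes p_j$ and $\varepsilon(p_j)=1$, is exactly what is needed to show the two assignments are mutually inverse rather than merely one being a section of the other, and your parenthetical observation that nothing forces orthonormality (that would require the dagger-Frobenius condition) correctly accounts for the ``possibly non-orthogonal'' qualifier in the statement. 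The only dependency left to the literature --- separable implies semisimple, and Wedderburn for commutative algebras over $\mathbb C$ --- is standard and appropriately flagged.
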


For any special CFA, phases are matrices that are diagonal in the corresponding basis. The
corresponding $|\psi\rangle$ (as in proposition \ref{prop:phases-states}) lies on the equator of
the Bloch sphere, justifying the name `phases'.

We can also consider interactions between a GHZ-structure and a W-structure.

\begin{definition}{\rm\cite{CK}}
A  GHZ- and a W-structure form a \em GHZ/W-pair \em if the following equations hold:
\begin{center}
\raisebox{4mm}{

\]

A graph rewrite rule $L \Rightarrow R$ can be applied to a graph $G$ by identifying a
\emph{matching}, that is, a monomorphism $m : L \rightarrow G$. The image of $L$ under $m$ is then
removed and replaced by $R$. This process is called \emph{double pushout (DPO) graph rewriting}. A
detailed description of how DPO graph rewriting can be performed on the graphs described in this
paper is available in \cite{DK}.

It is also useful to talk not only about `concrete' graph rewrite rules, but also \emph{pattern
graph} rewrite rules, which can be used to express an infinite set of rewrite rules. We form
pattern graphs using \emph{!-boxes} (or `bang-boxes'). These boxes identify portions of the graph
that can be replicated any number of times. More precisely, the set of concrete graphs represented
by a pattern graph is the set of all graphs (containing no !-boxes) that can be obtained by
performing any sequence of these four operations:
\begin{itemize}
    \item \texttt{COPY}: copy a !-box and its incident edges
    \item \texttt{MERGE}: merge two !-boxes
    \item \texttt{DROP}: remove a !-box, leaving its contents
    \item \texttt{KILL}: remove a !-box and its contents
\end{itemize}
For example, the following pattern graph represents the encoding of a natural number given in section \ref{sec:natural-numbers}:
\[\left\llbracket\ \
\begin{tikzpicture}[dotpic]
    \begin{pgfonlayer}{nodelayer}
        \node [style=box vertex] (0) at (0, 0.75) {};
        \node [style=white dot] (1) at (0, 0.75) {};
        \node [style=dot] (2) at (0, 0) {};
        \node [style=none] (3) at (0, -0.75) {};
    \end{pgfonlayer}
    \begin{pgfonlayer}{edgelayer}
        \draw (2) to (3.center);
        \draw (1) to (2);
    \end{pgfonlayer}
\end{tikzpicture}\ \ \right\rrbracket =
\left\{\
\begin{tikzpicture}[dotpic]
    \begin{pgfonlayer}{nodelayer}
        \node [style=dot] (0) at (0, 0) {};
        \node [style=none] (1) at (0, -0.75) {};
    \end{pgfonlayer}
    \begin{pgfonlayer}{edgelayer}
        \draw (0) to (1.center);
    \end{pgfonlayer}
\end{tikzpicture}\ ,\ \
\begin{tikzpicture}[dotpic]
    \begin{pgfonlayer}{nodelayer}
        \node [style=white dot] (0) at (0, 0.75) {};
        \node [style=dot] (1) at (0, 0) {};
        \node [style=none] (2) at (0, -0.75) {};
    \end{pgfonlayer}
    \begin{pgfonlayer}{edgelayer}
        \draw (1) to (2.center);
        \draw (0) to (1);
    \end{pgfonlayer}
\end{tikzpicture}\ ,\ \
\begin{tikzpicture}[dotpic]
    \begin{pgfonlayer}{nodelayer}
        \node [style=white dot] (0) at (-0.5, 0.75) {};
        \node [style=white dot] (1) at (0.5, 0.75) {};
        \node [style=dot] (2) at (0, 0) {};
        \node [style=none] (3) at (0, -0.75) {};
    \end{pgfonlayer}
    \begin{pgfonlayer}{edgelayer}
        \draw (1) to (2);
        \draw (2) to (3.center);
        \draw (0) to (2);
    \end{pgfonlayer}
\end{tikzpicture},\ \
\begin{tikzpicture}[dotpic]
    \begin{pgfonlayer}{nodelayer}
        \node [style=white dot] (0) at (-0.75, 0.75) {};
        \node [style=white dot] (1) at (0, 0.75) {};
        \node [style=white dot] (2) at (0.75, 0.75) {};
        \node [style=dot] (3) at (0, 0) {};
        \node [style=none] (4) at (0, -0.75) {};
    \end{pgfonlayer}
    \begin{pgfonlayer}{edgelayer}
        \draw (2) to (3);
        \draw (0) to (3);
        \draw (3) to (4.center);
        \draw (1) to (3);
    \end{pgfonlayer}
\end{tikzpicture},\ \
\ldots\
\right\} :=
\left\{
\point{0},\ \point{1},\ \point{2},\ \point{3},\ \ldots
\right\}\]
Note how not only the vertices are duplicated, but also all of the edges connected to those vertices.

A pattern graph rewrite rule is a pair of pattern graphs with the same inputs and outputs.
Furthermore, there is a bijection between the !-boxes occurring on the LHS and the RHS. When one of
the four operations is performed to a !-box on the LHS, the same is performed to its corresponding
!-box on the RHS.

We can rewrite (the natural numbers versions of) equations $\delta_1$, $\delta_2$, and $\delta_3$ as pattern graph rewrite rules.
\[
L:\raisebox{-8mm}{\begin{tikzpicture}[scale=0.6]
    \begin{pgfonlayer}{nodelayer}
        \node [style=none] (0) at (-0.5, 1.5) {};
        \node [style=none] (1) at (0.5, 1.5) {};
        \node [style=dot] (2) at (0, 0.5) {};
        \node [style=white dot] (3) at (0, -0.5) {};
        \node [style=none] (4) at (0, -1.5) {};
        \node [style=white dot] (5) at (1, 0.5) {};
        \node [style=box vertex,inner sep=1.5mm] (6) at (1, 0.5) {};
        \node [style=dot] (7) at (1, 0) {};
    \end{pgfonlayer}
    \begin{pgfonlayer}{edgelayer}
        \draw [bend right=30] (0) to (2);
        \draw [bend right=30] (2) to (1);
        \draw (2) to (3);
        \draw (3) to (4);
        \draw [bend right=30] (3) to (7);
        \draw (7) to (5);
    \end{pgfonlayer}
\end{tikzpicture}}
 \Rightarrow\ \
R: \raisebox{-8mm}{\begin{tikzpicture}[scale=0.6]
    \begin{pgfonlayer}{nodelayer}
        \node [style=none] (0) at (-1.25, 1.5) {};
        \node [style=white dot] (1) at (-0.25, 1.5) {};
        \node [style=white dot] (2) at (0.25, 1.5) {};
        \node [style=none] (3) at (1.25, 1.5) {};
        \node [style=dot] (4) at (-0.25, 1) {};
        \node [style=dot] (5) at (0.25, 1) {};
        \node [style=white dot] (6) at (-0.75, 0) {};
        \node [style=white dot] (7) at (0.75, 0) {};
        \node [style=dot] (8) at (0, -1) {};
        \node [style=none] (9) at (0, -1.5) {};
        \node [style=box vertex,minimum height=2mm,minimum width=6mm] (10) at (0, 1.5) {};
    \end{pgfonlayer}
    \begin{pgfonlayer}{edgelayer}
        \draw [bend right=30] (0) to (6.center);
        \draw [bend right=30] (6.center) to (4.center);
        \draw (4.center) to (1);
        \draw [bend right=30] (6.center) to (8.center);
        \draw [bend right=30] (8.center) to (7.center);
        \draw [bend right=30] (5.center) to (7.center);
        \draw [bend right=30] (7.center) to (3);
        \draw (5.center) to (2);
        \draw (8.center) to (9);
    \end{pgfonlayer}
\end{tikzpicture}}
\qquad \qquad
L:\raisebox{-8mm}{\begin{tikzpicture}[scale=0.6]
    \begin{pgfonlayer}{nodelayer}
        \node [style=dot] (0) at (0, 1.5) {};
        \node [style=none] (1) at (0, -1.5) {};
        \node [style=white dot]  (2) at (0, -0.5) {};
        \node [style=white dot] (3) at (1, 1) {};
        \node [style=dot] (4) at (1, 0.5) {};
        \node [style=box vertex,inner sep=1.5mm] (5) at (1, 1) {};
    \end{pgfonlayer}
    \begin{pgfonlayer}{edgelayer}
        \draw (0) to (2.center);
        \draw (2.center) to (1);
        \draw [bend right=30] (2.center) to (4.center);
        \draw (4.center) to (3.center);
    \end{pgfonlayer}
\end{tikzpicture}}
 \Rightarrow\ \
R: \raisebox{-8mm}{\begin{tikzpicture}[scale=0.6]
    \begin{pgfonlayer}{nodelayer}
        \node [style=dot] (0) at (0, 1.5) {};
        \node [style=none] (1) at (0, -1.5) {};
    \end{pgfonlayer}
    \begin{pgfonlayer}{edgelayer}
        \draw (0)--(1);
    \end{pgfonlayer}
\end{tikzpicture}}
\qquad \qquad
L:\begin{tikzpicture}[dotpic]
    \begin{pgfonlayer}{nodelayer}
        \node [style=none] (0) at (0, 1.5) {};
        \node [style=white dot] (1) at (1.25, 1.5) {};
        \node [style=box vertex, minimum width=9 mm, minimum height=2 mm] (2) at (1.75, 1.5) {};
        \node [style=white dot] (3) at (2.25, 1.5) {};
        \node [style=white dot] (4) at (0.5, 1) {};
        \node [style=white dot] (5) at (1.5, 1) {};
        \node [style=dot] (6) at (0.75, 0.5) {};
        \node [style=dot] (7) at (1.75, 0.5) {};
        \node [style=white dot] (8) at (0, 0) {};
        \node [style=white dot] (9) at (0, -1) {};
        \node [style=none] (10) at (0, -1.75) {};
    \end{pgfonlayer}
    \begin{pgfonlayer}{edgelayer}
        \draw (9) to (10.center);
        \draw[bend right=15] (4) to (6);
        \draw (8) to (9);
        \draw (0.center) to (8);
        \draw[out=30, in=258] (6) to (1);
        \draw[bend right=15] (5) to (7);
        \draw[bend right] (9) to node[tick]{-} (7);
        \draw[bend right] (8) to (6);
        \draw[bend right=15] (7) to (3);
    \end{pgfonlayer}
\end{tikzpicture}
 \Rightarrow\ \
R: \raisebox{-9mm}{\begin{tikzpicture}[scale=0.6]
    \begin{pgfonlayer}{nodelayer}
        \node [style=none] (0) at (0, 1.5) {};
        \node [style=none] (1) at (0, -1.5) {};
    \end{pgfonlayer}
    \begin{pgfonlayer}{edgelayer}
        \draw (0)--(1);
    \end{pgfonlayer}
\end{tikzpicture}}
\]
These equations only apply to encodings of the natural numbers, not for arbitrary inputs $\psi$.
However, we showed in sections \ref{sec:natural-numbers}, \ref{sec:rationals}, and
\ref{sec:additive-inverses} that even those weaker equations suffice to recover the usual
identities for fraction arithmetic. Also note that the extra white vertices in $\delta_3'$
eliminate the case of $\frac{0}{0} \neq 1$.

So why bother expressing graphical identities as graph rewrite rules? Graph rewriting can be
automated! {\tt quantomatic}~\cite{quanto} is a automatic graph rewriting tool developed by two of
the authors. It is specifically designed to work with the kinds of diagram described in this paper
and to perform pattern graph rewriting. It remains to be seen what new insights can be obtained by
adding the new graphical identities derived in this paper to {\tt quantomatic}.

\section{Closing Remarks}

In previous work two of the authors showed that the main difference between the GHZ state and the W state, or more precisely, the induced GHZ structure and W structure, boils down to the value of the loop map of these CFAs:
\[
\frac{\mbox{GHZ}}{\mbox{W}}\ \ =\ \ \frac{\begin{tikzpicture}[dotpic,yshift=5mm]
       \node [dot] (a) at (0,0) {};
       \node [dot] (b) at (0,-1) {};
       \draw [bend left] (a) to (b);
       \draw [bend right] (a) to (b);
       \draw (0,0.5) to (a) (b) to (0,-1.5);
\end{tikzpicture} = \
\begin{tikzpicture}[dotpic]
       \draw (0,1) -- (0,-1);
\end{tikzpicture}
}{
\circl\ \begin{tikzpicture}[dotpic]
               \node [dot] (a) at (0,0.5) {};
               \node [dot] (b) at (0,-0.5) {};
               \draw [bend left] (a) to (b);
               \draw [bend right] (a) to (b);
               \draw (0,1) to (a) (b) to (0,-1);
       \end{tikzpicture}\ \  =
       \begin{tikzpicture}[dotpic]
               \node [dot] (a) at (0,0.7) {};
               \node [dot] (b) at (0,-0.7) {};
               \draw (0,1.2) to (a) (b) to (0,-1.2);
               \draw (a) to [downloop] ();
               \draw (b) to [uploop] ();
       \end{tikzpicture}
}
\]
In this paper, by focussing on the interaction of these two structures, we were able to establish a connection with the operations of basic arithmetic:
\[
\frac{\mbox{W}}{\mbox{GHZ}}=\frac{+}{\times}
\]
More specifically, the diagrammatic language of these structures was sufficient to encode the positive rational numbers (and, with a minor extension, the whole field of rational numbers).

In the process of highlighting this encoding, we identified a surprising fact. The distributive law
governing the interaction of addition and multiplication in arithmetic also captures the
interaction of the GHZ-structure and W-structure. Future work includes exploiting this interaction
in the study of multipartite quantum entanglement, which brings us back to the initial motivation
for crafting a compositional framework to reason about multipartite states.

\bibliography{rationals}

\end{document}